\definecolor{darkspringgreen}{rgb}{0.09, 0.45, 0.27}
\newcolumntype{P}[1]{>{\RaggedRight\arraybackslash}p{#1}}
\newtheorem{theorem}{Theorem}[section]
\newtheorem{lemma}[theorem]{Lemma}
\begin{document}

\title{NP-Hardness of a 2D, a 2.5D, and a 3D Puzzle Game}
\author{Matthew Ferland, Vikram Kher}
\maketitle
\begin{abstract}
In this paper, we give simple NP-hardness reductions for three popular video games. The first is Baba Is You, an award winning 2D block puzzle game with the key premise being the ability to rewrite the rules of the game. The second is Fez, a puzzle platformer whose main draw is the ability to swap between four different 2-dimensional views of the player's position. The final is Catherine, a 3-dimensional puzzle game where the player must climb a tower of rearrangable blocks.
\end{abstract}

\section{Introduction}
Games have been a major component of computer science research nearly since the inception of the field. Early on, research was focused entirely on algorithmic problems, notably chess \cite{shannon1950xxii}. In the 1970s, this trend shifted with the emergence of computational complexity research. Researchers began demonstrating the intractability of games which were typically combinatorial in nature \cite{schaefer1978complexity}. Despite video games (that would later shown to be intractible) becoming popular in the 1980s, it has only been within the past 20 years that there have been relevant intractability results. A few examples include showing the NP-hardness of generalized versions of Tetris \cite{demaine2003tetris}, the intractability of handful of NES and SNES games \cite{aloupis2015classic}, and the undecidability of a PC puzzle game \cite{demaine2020recursed}.

We will add to this recent trend by presenting three new hardness results for some recent block-based puzzle games, where we believe each proof is simple enough to be shown in an undergraduate algorithms course. For one of these games, the classification is tight.

The first of these games is Baba Is You, which released for PC and Nintendo Switch on May 13, 2019 to critical acclaim and commercial success. The game either won or was nominated for a dozen awards, including winning two awards at the Independent Games Festival before its release \cite{wik}. The developers were invited to talk about their experiences at the 2020 Game Developer Conference.\cite{BabaGDCTalk}

The basic premise of the game is that a player token needs to reach a goal token. However, the player may move around semantic blocks that determine various rules and interactions between objects. Regardless of the modifcations to the ruleset, the goal is always to reach some token.

The second of these games is Fez, a puzzle platformer released on April 13th, 2012 to universal acclaim \footnote{\url{https://www.metacritic.com/game/pc/fez/critic-reviews}}. Like Baba Is You, it won or was nominated for several awards. By 2014, it had sold over a million copies \cite{FezSales} and was cited as an inspiration for several later video games \cite{FezInsp1}\cite{FezInsp2}\cite{FezInsp3}.

Fez's novel feature is that it is a 2-dimensional while also acting 3-dimensional. The player can change the perspective to any of the four orthogonal perspectives, and as such, each object has four faces that can be interacted with. The game's mechanics have various ways of exploiting this feature.

The final game we explore is Catherine, a puzzle game released in 2011 (the exact date varies by region). Catherine, like the previous two, is a game that both won and was nominated for several awards \cite{wikCath}. The game has sold over a million copies, and in 2019 received a remake. The game intertwines two mechanically separate sections. The first is a dating drama, while the second, which this paper focuses on, is a 3-dimensional puzzle game that takes place in dreams of the main character.

This puzzle game requires the player to ascend a tower of various blocks, which can be rearranged by the player in order to facilitate their climb. The ``selling point'' feature of these segments is the game's unique physics. Instead of falling being obstructed by having a block below, it is instead only requires that a block below touches via a single one dimensional ``edge."

\section{Baba Is You}
\subsection{Rules}
The core game mechanics of Baba Is You are simple to understand. There is a token the player controls on a grid. A player can move this token up, right, left, or down. These are the only possible inputs. The player's goal is to move the token to the same square as a goal token.

What makes the mechanics unique is that there are additional rules other than the ones above which are subject to player changes. All additional game rules are determined by "$x$ IS $y$" sentences, where $x$ is the identifier for one or several tokens, and $y$ is either another token or a modifier to the functionality of the token. The sentences can possibly contain one or more "AND"s. For example, we may have "FLAG AND CRAB IS WIN AND FLOAT" which means that both the flag and crab tokens have the "WIN" and "FLOAT" properties added to them.

These rules can be interacted with as they each exist in the form of blocks (the tokens, modifiers, "IS," and "AND" are all tiles). These blocks may be moved by the player token. If the player token attempts to move onto a tile that has a rule block, the rule block will be pushed to the adjacent spot in the direction the player token moved. Players can disable a rule by pushing a block to not be adjacent to the others, and then create a new rule by pushing another block in its place.

While there are several modifiers the game can use, we will only describe the ones relevant for our proof. First, "YOU" is a modifier which denotes the token(s) the player controls. Whatever is identified by this will move in the direction specified when the player chooses a direction, and will qualify for reaching the goal token. Second, "WIN" denotes the token(s) that the player needs to reach with the "YOU" token(s) to win the level. Next, "STOP" identifies tokens in which other tokens cannot pass through. If a token attempts to move or be moved to a tile that has rule "STOP," then the token will simply not be moved. Finally, "DEFEAT" indicates tokens that will destroy a player token if they occupy the same tile space.

For example, let's look at figure \ref{fig:example}. Here, we have rules "BABA IS YOU," meaning the player controls the white token, "FLAG IS WIN," meaning the player's goal is to have the white token touch the yellow flag token, "WALL IS STOP," meaning that the white token can't move through the grey wall tiles, and "SKULL IS DEFEAT," meaning our token will be removed upon touching the red skull, resulting in a loss. Clearly, our token is unable to get to the flag currently, since both the skull and the walls are stopping the token from being able to reach the flag. However, the player can move the rule blocks using the white token to fix this. By pushing "IS" down one tile, the "SKULL IS DEFEAT" rule will be disabled, allowing the white token to safely reach the flag token and solve the puzzle.

\begin{figure}[h!]
\begin{center}
\includegraphics[scale=.22]{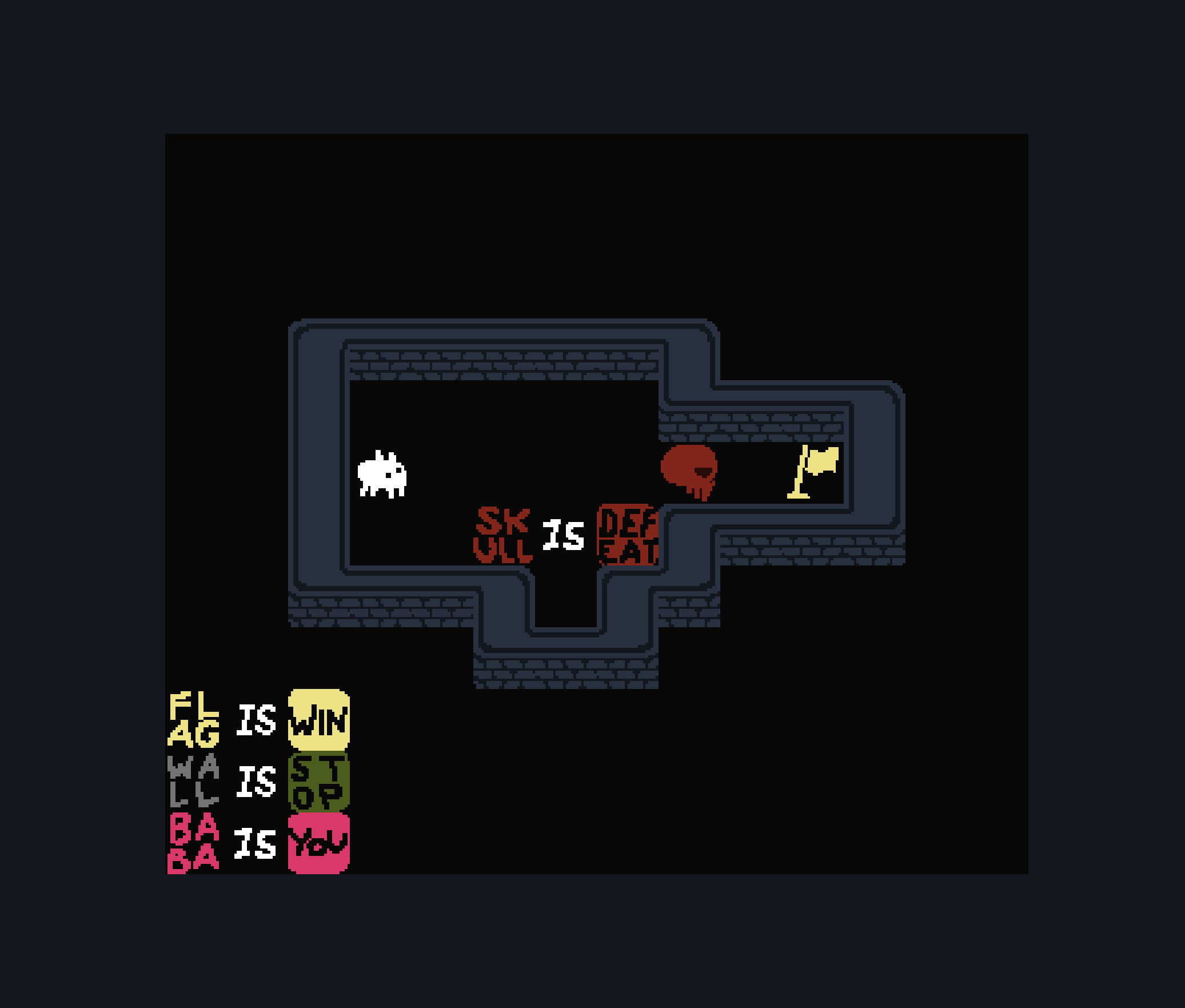}
\caption{Sample Baba Is You Puzzle}
\label{fig:example}
\end{center}
\end{figure}

\subsection{NP-Hardness}
We focus on showing the ruleset to be NP-hard. To this end, we will focus on being able to instantiate maps corresponding to any NP-hard problem. Here, we will reduce from 3-SAT.
\subsection{Reduction}
Our construction (as seen in figure \ref{fig:babareduction}) of the map begins with defining the following rules in the bottom left of the map: 'BABA IS YOU', 'FLAG IS WIN', and 'WALL IS STOP'. For each variable $x_i$ in the CNF, we create two unique tokens. We will refer to one of these as $y_i$ and the other as $\bar{y}_i$.

We represent each clause of 3-SAT in the map with a clause gadget (as seen in figure \ref{fig:babaclause}). Each clause gadget consists of three 1-block wide pathways that the player may choose to walk through. At the entrance of a given pathway, there exists $y_i$ or $\bar{y}_i$ token that corresponds to $x_i$ or $\bar{x_i}$ respectively in the clause.

After passing through token $y_i$, there is a rule block corresponding with the $\bar{y}_i$ token blocking the way (or alternatively $y_i$ after passing through $\bar{y}_i$). Then, there is an open space followed by "IS DEFEAT." Below and to the right of this construction is a wall. We connect the $n$ clause gadgets together sequentially and insert a flag at the end of the final clause gadget.

\begin{figure}[H]
\begin{center}
\includegraphics[scale=.25]{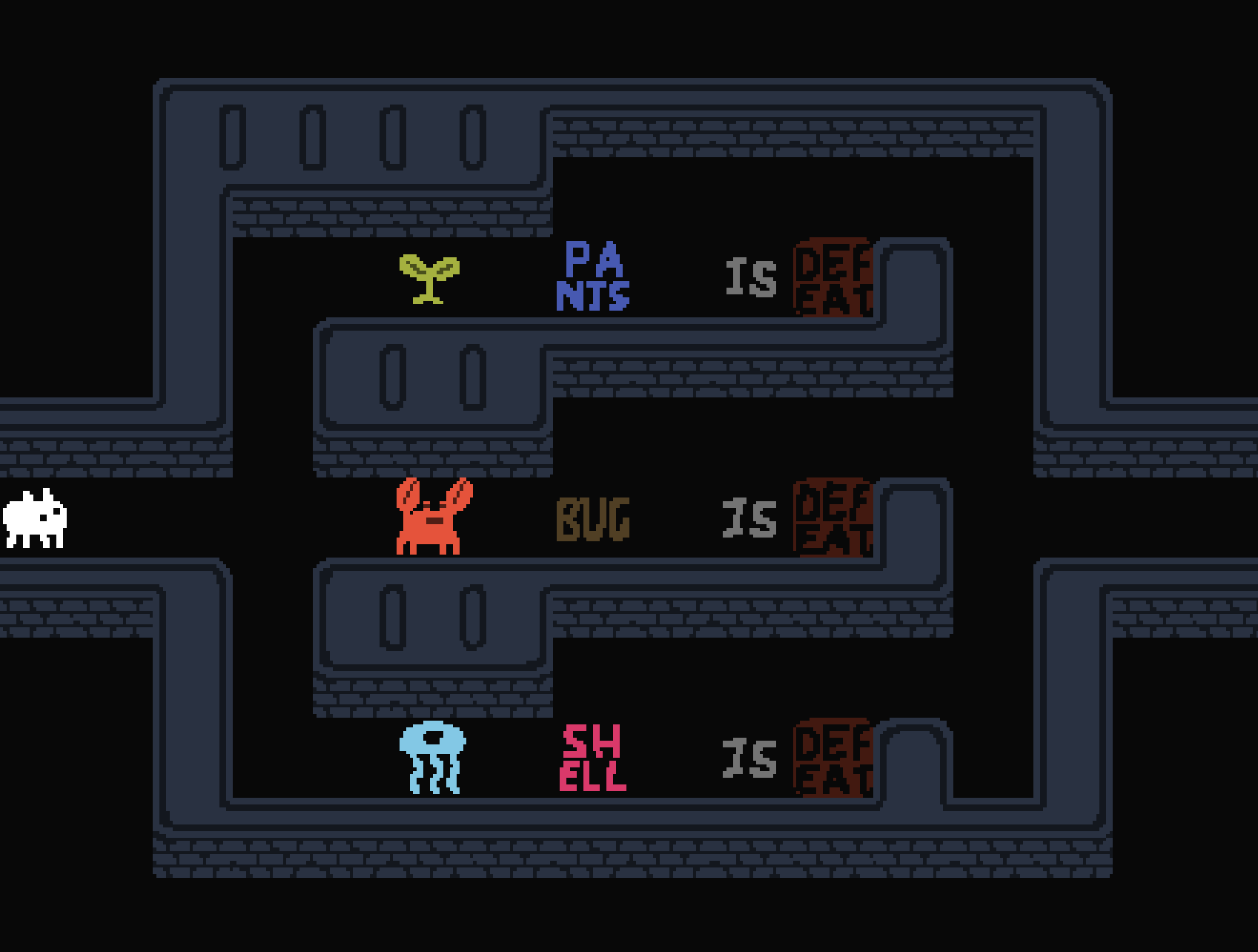}
\caption{Clause Gadget. Here, "PANTS" is the negation of the plant tokens, "BUG" is the negation of crab tokens, and "SHELL" is the negation of jellyfish tokens.}
\label{fig:babaclause}
\end{center}
\end{figure}

\begin{figure}[H]
\begin{center}
\includegraphics[scale=.12]{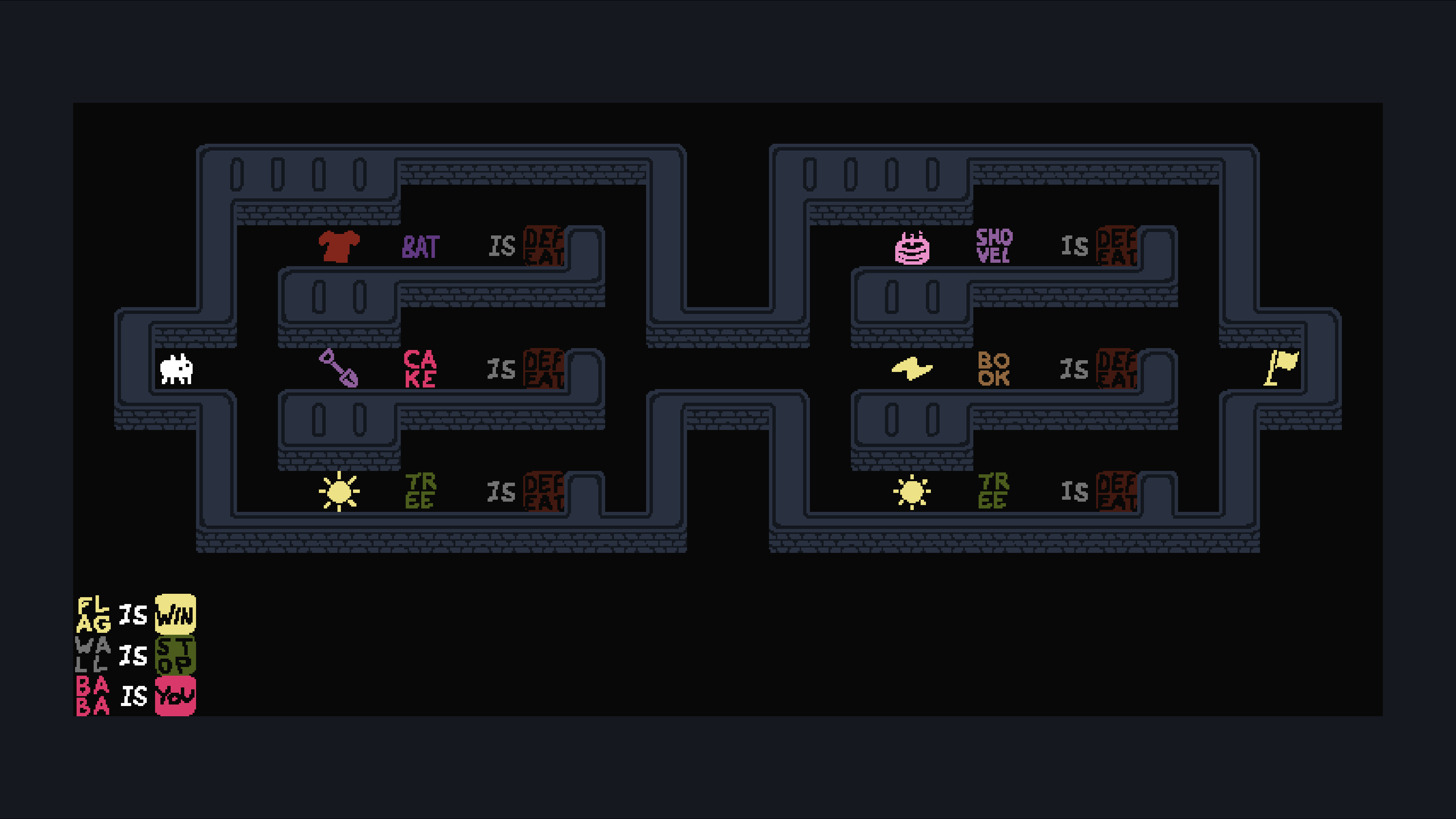}
\caption{Representation of $(x_1 \vee x_2 \vee x_3) \wedge (\overline{x}_2 \vee x_4 \vee x_3)$. Here, $x_1$ corresponds to shirt token, $x_2$ corresponds to the shovel token, $x_3$ corresponds to the sun token, $\overline{x}_2$ corresponds to the cake token, and $x_4$ corresponds to the lightning bolt.}
\label{fig:babareduction}
\end{center}
\end{figure}

\subsection{Correctness of the Reduction}
It now remains to verify that our embedding of 3-SAT is valid; namely that if our embedded 3-SAT formula is satisfiable then a path exists for the player to win the map.

If the corresponding 3-SAT is satisfiable, then for each clause there exists variable assignments that make it evaluate to true. If the player simply traverses each clause choosing a path corresponding to a variable assignment ($y_i$ if $x_i$ is true, and otherwise $\bar{y}_i$), they will eventually reach the flag since they will never enter a path where the token has the property DEFEAT, which only is active if one traversed a corresponding negation to the variable, which will never be assigned in a 3-SAT solution.

Conversely, we also wish to now verify that if a player can win the map, the embedded 3-SAT problem is satisfiable. If a path exists for a player to win the map then necessarily they must be able to traverse each clause. In order to traverse a clause, one needs to ensure that there is a path not corresponding to a negation of a previously selected variable. And since each clause must be traversed, there will exist a satisfying assignment of variables that makes each clause evaluate to true. Thus, all clauses are satisfied with the variable assignment from this path.

\begin{figure}[h!]
\begin{center}
\includegraphics[scale=.3]{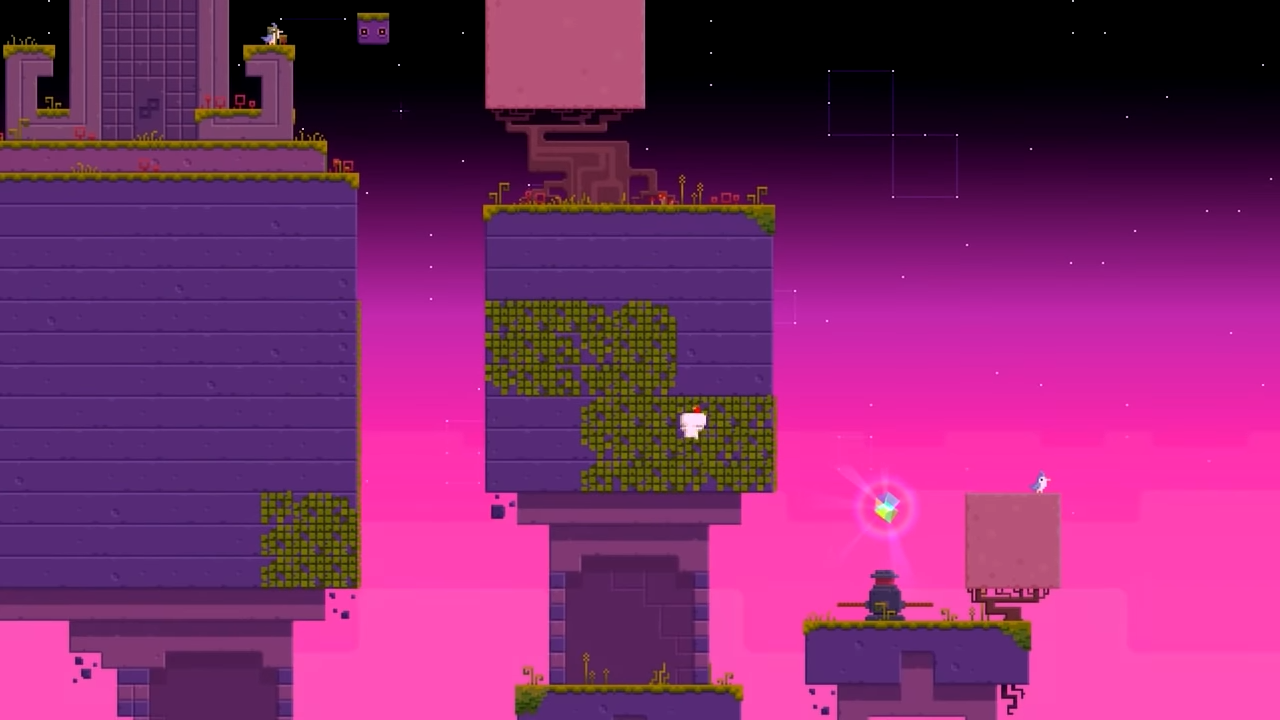}
\caption{Fez Sample Map}
\label{fig:fezSample}
\end{center}
\end{figure}

\begin{figure}[h!]
\begin{center}
\includegraphics[scale=.6]{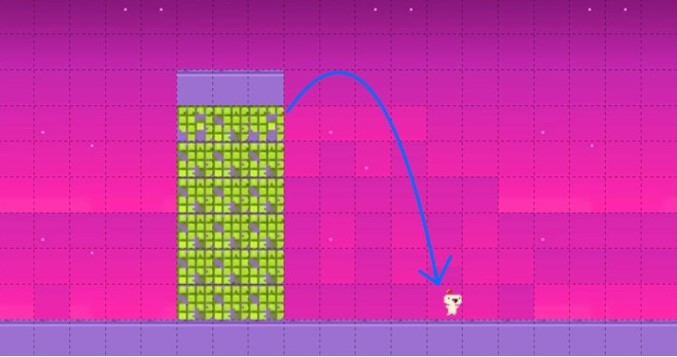}
\caption{Jumping from a vine}
\label{fig:fezjump}
\end{center}
\end{figure}

\section{Fez}
Fez is a platformer game with a variety of mechanics. For this reduction, we will use only a small handful of them which we will introduce now.

Fez's most notable mechanic is the shifting of perspective. One may think of the game world in 3 dimensions with the the player seeing through a camera that is always located on one face of a cube, excluding the top or bottom faces. The player may switch the perspective to a different face of the cube at any time. These switches rotate elements of the game world. Thus, it is a 3-dimensional game world that is interacted with on a 2-dimensional basis. Figure \ref{fig:fezSample} shows an example of a map in Fez. As with Baba is you, we will only introduce the minimum amount of information in order to construct the proof.

The player may make the character, Gomez, move left, right, or jump so long as there is a platform below that they are standing on. If there is nothing below, then they will fall. If Gomez is touching a vine (represented by a green texture on a wall), they may move around anywhere on the vine, even without a platform below, and can jump off of the vine at any time. These vines are attached to a face of a block.

When Gomez jumps off of a vine, he can travel a horizontal distance related to his current height compared to the ground. If Gomez is less than or equal to 4 blocks off the ground and he jumps, he can travel a maximum horizontal distance of 4 blocks. If Gomez is at a height greater than 4 blocks, then he can jump a horizontal distance of 4 blocks plus some additional distance that grows at a sub-linear rate compared to his height. For example, in figure \ref{fig:fezjump} we see Gomez's jump arc from a height of 6 and we see him land 5 blocks away from his starting position. Note, if Gomez jumped from a height of 2, he would land a maximum of 4 blocks away from his starting position.

The goal for the player is to pick up pieces of the Hexahedron, which are golden blocks. Finally, there are levers which can rotate rows of blocks 90 degrees, bringing one of the other faces to the current one. It is primarily the combination of vines and these rotations which allow us to construct NP-hard puzzles in Fez.

\subsection{Reduction}
We aim to embed a 3-SAT problem into a constructed in-game tower such that the player is able to horizontally traverse the tower from right to left and thus solve the puzzle if and only if the embedded 3-SAT is satisfiable. This 3-SAT problem has $n$ variables and $m$ clauses.

\begin{figure}[H]
\begin{center}
\includegraphics[scale=1.1]{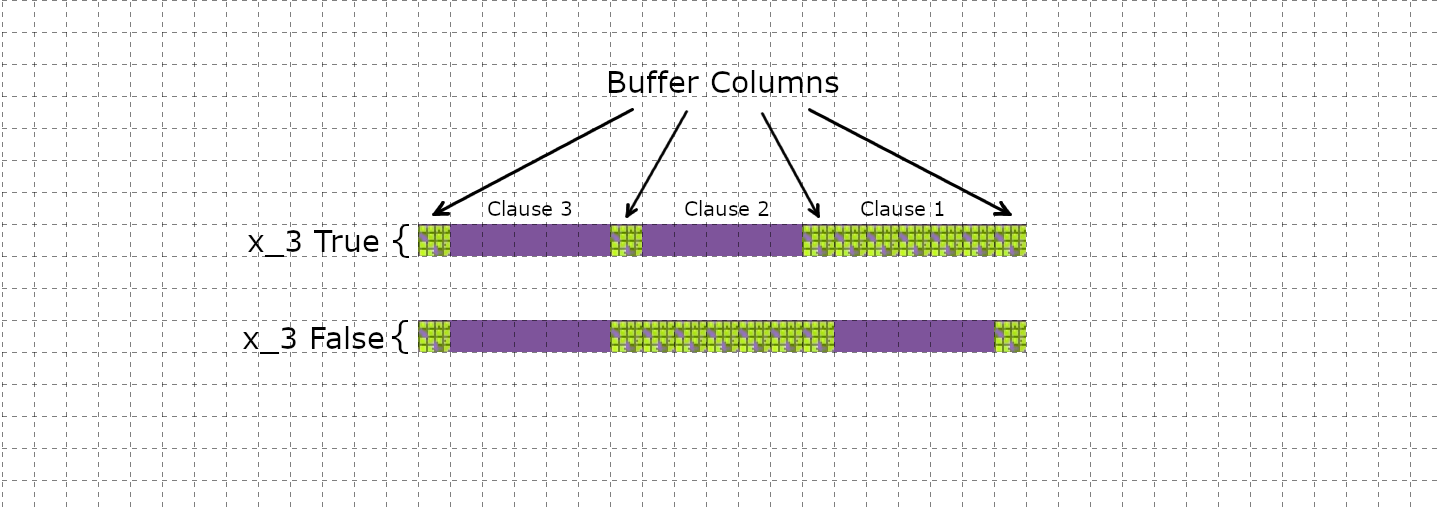}
\caption{Ring Gadget for $x_3$ from figure \ref{fig:fezreduction}}
\label{fig:fezColumns}
\end{center}
\end{figure}

\begin{figure}[h!]
\begin{center}
\includegraphics[scale=1.2]{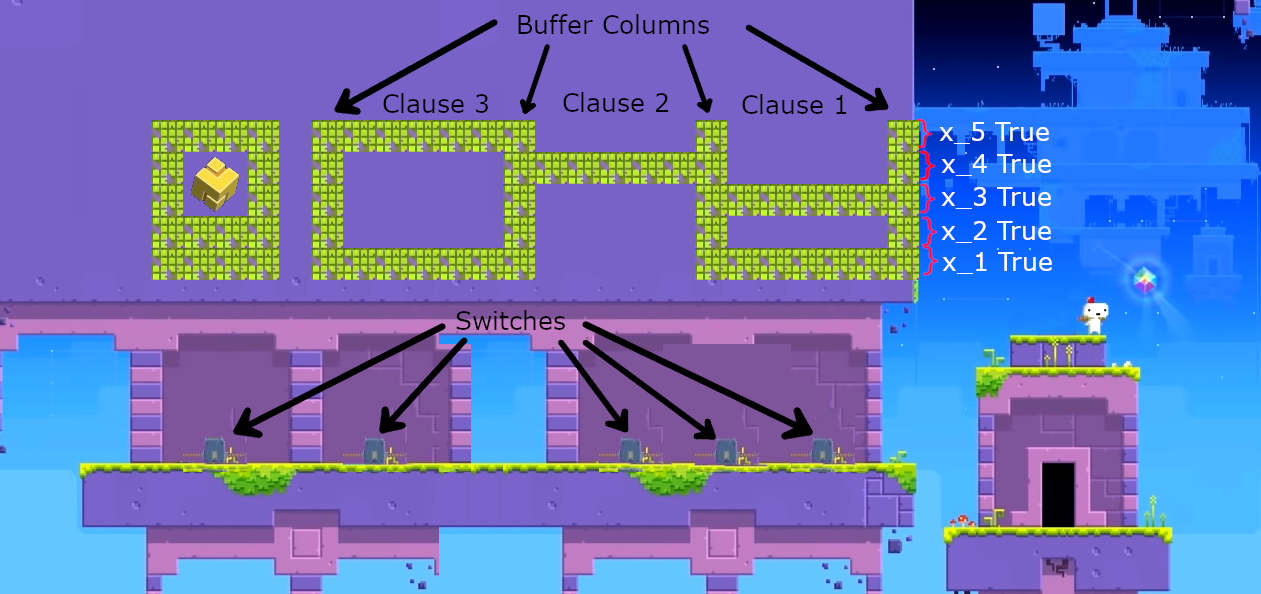}
\caption{Fez Construction: $(x_1 \lor \neg x_2 \lor x_3) \wedge (\neg x_3 \lor x_4 \lor \neg x_5) \wedge (x_1 \lor \neg x_4 \lor x_5)$. In this figure, all variables are set to true (in other words, they have their front face towards the camera). The bottom row is $x_1$, the second to bottom row is $x_2$, and so on, until the top row is $x_5$. The first clause is represented by the right most 5 columns, the second clause by the center five columns, and the last clause by the five columns left of that.}
\label{fig:fezreduction}
\end{center}
\end{figure}

We begin by defining the tower which consists of $n$ ring gadgets stacked vertically. Each ring corresponds to a variable in 3-SAT and has $(n+1)\cdot m$ columns (as seen in figure \ref{fig:fezColumns}). Each ring also has 4 faces: front, back, left, and right. We leave the left and right faces unused. We additionally connect each ring to a unique turnstile mechanism (a switch) that the player can manipulate to rotate each ring between these 4 faces.

We further divide each ring into $m$ segments each with $n+1$ columns. One of these columns serves as a buffer, while the other $n$ columns exist to enforce the clauses. For the buffer column, we simply let each row of its front and back faces contain vines. The idea is that these vines exist on each buffer column to allow Gomez to choose which variable he will use to traverse the next clause. For the $n$ "main" columns, for row $n_i$ and segment $m_j$, all of the columns in row $i$ have a vine on the front face if and only if variable $i$ appears unnegated in clause $j$, and have a vine on the back face if and only if variable $i$ appears negated in clause $j$. We leave the left and right faces of every segment without vines. The intuition is that the front of the ring corresponds to the variable being true, while the back represents the variable being false. For each clause an assignment satisfies, there will be a row of vines connecting the buffer columns that surround the clause.

We employ a large gap between buffer columns to prevent Gomez from being able to jump between buffer columns directly without the clause being satisfied. If $n < 5$, we will enforce that there are always at least $5$ main columns for each clause to prevent this jumping. Once, $n > 5$ then having a gap of distance $n$ is sufficient to prevent jumping. An example of our construction is shown in figure \ref{fig:fezreduction}.

We claim that the player is able to horizontally traverse this tower from right to left iff the embedded 3-SAT is satisfiable.

\subsection{Correctness of Proof}
Suppose a satisfying assignment exists for our embedded 3-SAT formula, we prove the player must be able to horizontally traverse the constructed tower and thus solve the puzzle.

Since a satisfying assignment exists for our embedded 3-SAT instance, at least one variable in each clause evaluates to true. We argue that this means that the player can rotate some subset of the rings through the turnstile mechanism in such a way that every "main" column contains at least one segment covered with player-facing vines. In our construction, we only placed vines on the front face of non-negated literals and on the back face of negated-literals. Initially, all of the rings' front faces are player-facing. Thus, we initially interpret all literals as being set to true. By rotating a ring twice (180 degrees), the player can set the literal that corresponds to the ring to false. Thus, a literal evaluating to true in a clause results in its corresponding segment in the ring having player-facing vines. Since a satisfying assignment exists, there must exist some combination of ring rotations that results in every segment having at least one row of vines.

Since each segment's first column has all vines, and contains a row that with all vines (through the satisfying assignment), and links to the first column of the next segment which has all vines, the player can traverse to each subsequent segment, letting the player reach the end.

Now, suppose the the player is able to horizontally traverse the constructed tower from right to left, we argue that the 3-SAT has a satisfying assignment. Since the player is able to traverse the tower, there must exist at least one row of only vines for each segment, since the player is unable to jump horizontally $n$ blocks from a height of $n$. Thus, there exists some configuration of the rows that let us achieve this.

Then, by construction, we can apply that configuration as assignments for the variables, achieving a true literal in each clause. Consequently, the embedded 3-SAT instance is satisfiable.

\section{Catherine}
\subsection{Rules}
In Catherine, the player controls a character called Vincent. In each level, Vincent's goal is to climb a tower of blocks to reach its top.

He has a few basic controls. He may move to any adjacent block that is on his same vertical level as the one he is standing on. He may climb on top of an adjacent block that is exactly one space higher than the block he currently stands on. Similarly, he may step down to an adjacent block that is exactly one lower than his current block. If Vincent attempts to step down in a direction where there is no block that is precisely one level lower, he will hang from the edge of his current block. From here the player may either let go of the block to drop down, or continue hanging while being able to traverse to adjacent blocks at the same height. For example, in figure \ref{fig:cathhang}, Vincent is hanging from a block and can choose to either move to the left or right of him, or drop down to the block below him. Notably, Vincent does not have the ability to move up to a vertically higher block, and is stuck traversing horizontally at that level.

The main puzzle mechanic comes from Vincent's capacity to move blocks. Vincent may push a block in the direction opposite of where he stands. Similarly, he may pull a block in the direction he is currently standing. Vincent cannot pull a block if there is a block placed behind him, though he can pull a block if there is a gap behind him, since he will simply hang on the ledge.

If a block has at least one of its bottom edges touching the top edge of a block below it (as in, at least 2 vertices of a cube below it), then it will remain in place, regardless of how blocks below are moved. If there is no such edge, the block will fall until it encounters another block's edge, or until it reaches the bottom of the map, where it disappears.  For example, figure \ref{fig:cathedge} shows two red blocks that form a bridge over a gap. Each red block is supported by the blue block that is adjacent to it. If one of these blue blocks is removed, it's adjacent red block would fall by a height of 1.

In addition to standard blocks, towers can contain several special types of blocks. One special block we will use in our reduction is a \textit{type-2 cracked block}, which may only have up to two instances of Vincent and/or blocks passing over it before it disintegrates and disappears. Thus, we say that a \textit{type-2 cracked block} initially has a durability of 2 and disappears when its durability reaches zero. Note, there is a \textit{type-1 cracked block} but we do not use it in our reduction. So, from here on, we will simply use the term \textit{cracked block} to denote \textit{type-2 cracked blocks}. Additionally, we note that certain blocks cannot be moved. In the following figures, the only blocks that Vincent can move are those colored green. Grey, blue, and red blocks are not movable by him.

There is also effectively a timer for Catherine's levels, which we refer to as the death plane. The death plane is a invisible plane that slowly rises, incrementally destroying layers of the tower in the process. As the name implies, Vincent will fall to his death if the plane reaches the layer he is currently standing on. Initially, this plane is below the playable level, but it will eventually reach the base of the tower. At least on the first map, the death plane rises and destroys one layer of the tower approximately every 10 seconds, so for our reduction we will conservatively assume this occurs once every 9 seconds. Furthermore, we will assume (again, conservatively) that Vincent can take one action every second. This is an acceptable assumption, since we will show that the tower in our reduction cannot be traversed even with an unlimited amount of time if we are reducing from a false instance of 3-SAT. Meanwhile, if we are reducing from a positive instance of 3-SAT, we will always be able to solve the puzzle within the time constraint.

\begin{figure}[H]
\begin{center}
\includegraphics[scale=.2]{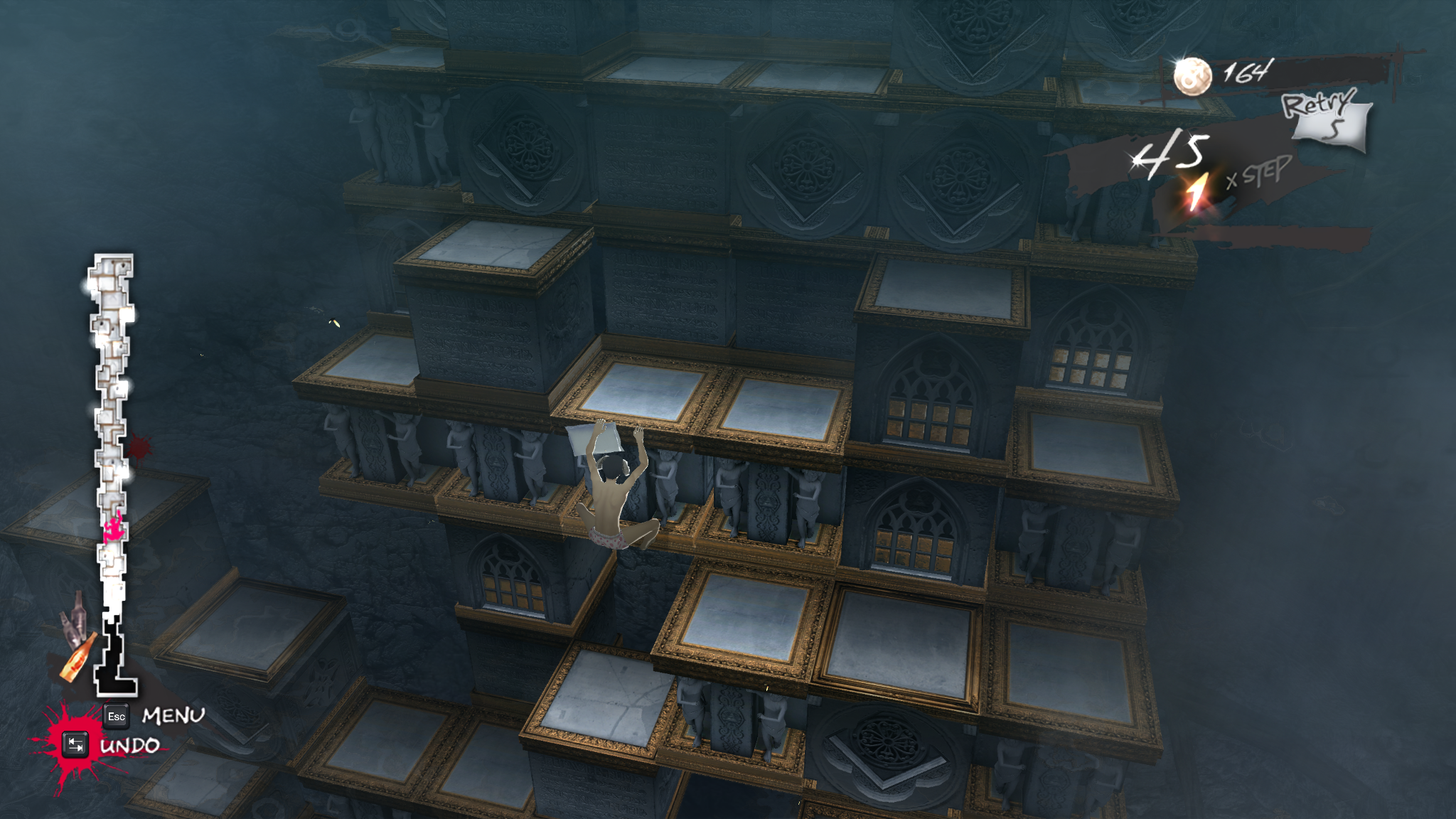}
\caption{Example of Vincent Hanging from a Block}
\label{fig:cathhang}
\end{center}
\end{figure}

\begin{figure}[H]
\begin{center}
\includegraphics[scale=.3]{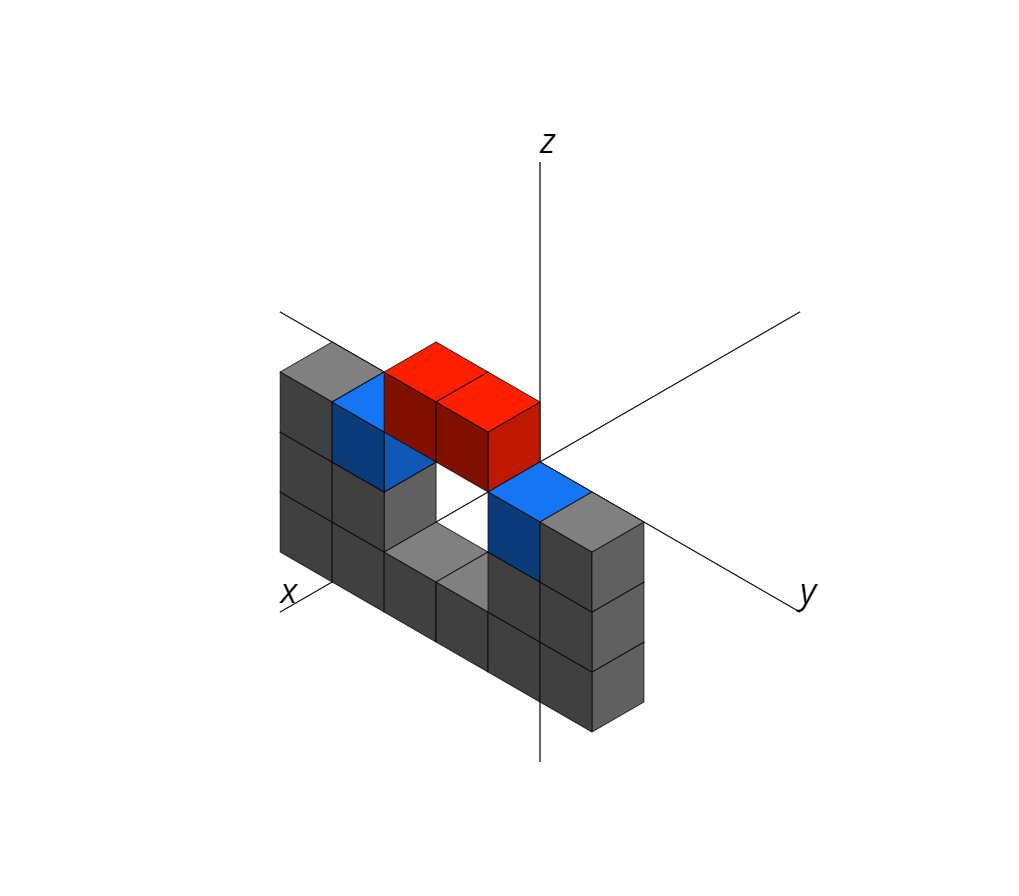}
\caption{Example of the Edge Property}
\label{fig:cathedge}
\end{center}
\end{figure}

\subsection{NP-Hardness}
We aim to show that the game is NP-Hard. We reduce from 3-SAT with $n$ variables and $m$ clauses to a constructed puzzle tower in the game.
\subsection{Reduction}
We begin by constructing a block tower of $O(n^2 + m)$ height, $O(n)$ width, and constant depth. The tower is divided into two sections, the staircase section and the main section. The staircase section consists of $2n^2 + 2m$ stacked spiral staircase gadgets (as seen in figure \ref{fig:cathstaircase}). By climbing these spiral staircases, Vincent will gain enough time to traverse the main section of the tower before the death plane destroys it.

Vincent can traverse the spiral staircase gadget, shown in figure \ref{fig:cathstaircase}, in a counter-clockwise fashion. Specifically, he may walk two blocks forward then ascend the block in front of him and finally rotate 90 degrees to the left. By repeating this process three times, Vincent will traverse a single spiral staircase gadget and gain three blocks of height, after only 8 seconds have passed.

The stacked spiral staircase gadgets allow Vincent to reach the tower's main section, which is of $O(m)$ height. The bottom floor of this section consists of the main platform gadget. This gadget, shown in figure \ref{fig:cathmain}, is a $6n$-block wide and $3$-block deep platform. The main platform gadget provides access to $n$ variable gadgets through two 1-block wide \textit{cracked block} pathways. The figure has only a portion of the main platform gadget, where we show a section of the platform (with gray blocks) and show the pathways to 2 different variable gadgets (with red blocks).

Each variable gadget (figure \ref{fig:cathvariable}) consists of a 5-block wide and 3-block deep platform with two towers of height $O(m)$ and width $2$ attached by 1-block to the variable platform. The red block pathways connect back to the main platform and each green block tower extends to a height of $O(m)$. Additionally, each tower is supported through the edge property by 1-block. Removal of this "connector" block will cause its respective tower to collapse. Intuitively, before either of these connector blocks are removed, the variable gadget is simultaneously set to both True and False. By removing one of the connector blocks, we effectively eliminate one of the two assignments. Since each variable gadget is connected to the main platform gadget by two 1-block wide \textit{cracked block} pathways, it can only be entered and left a total of 4 times; this includes being entered by Vincent, being left by Vincent, and taking a block from the variable gadget to the main platform gadget.

The main platform gadget also connects to the clause gadgets via the gap gadget. The clause gadgets can be reached from the right side of the main platform gadget by crossing a 3-block wide and $n-3$ height pit (as seen in figure \ref{fig:cathmaingap}). Note, we say that this gap has three slots (numbered from left to right); this terminology is used to indicate the three different places where blocks may fall into. The blue block is a visual placeholder for $n-4$ vertically stacked blocks. This gap cannot be traversed by Vincent unless he collects and deposits $n$ blocks into the gap. Upon crossing this pit, there is a spiral staircase gadget that connects to the first clause gadget.

 Each clause gadget (figure \ref{fig:cathclause}) consists of a $6n$-block wide and $3$-block deep platform with 1-block placed on the north-west corner of the platform. There also exists up to three adjacent towers that are separated from the platform by a 1-block wide gap. A tower is adjacent to a clause gadget if and only if its corresponding variable evaluates to True in the clause. Figure \ref{fig:cathclause} displays a clause gadget with two green block towers, meaning that two out of three of the clause's variables evaluate to True. The base of these two towers originate from their respectively variable gadgets (recall figure \ref{fig:cathvariable}). Intuitively, the extra block placed on each platform can be used to bridge the 1-block wide gap between the platform and a tower. Using this bridge, the player could extract a block from a tower and pull it onto the platform. Clause gadgets are connected to each other by spiral staircase gadgets that can be reached by crossing a 2-block wide and 2-block high pit that is attached to the right side of the gadget (as seen in figure \ref{fig:cathclausegap}). Vincent must push two blocks into this gap to make it crossable (see figure \ref{fig:cathedge} for an example). The exit door of the tower can be reached by traversing the last clause gadget. This completes our encoding of 3-SAT.

\begin{figure}[H]
\begin{center}
\includegraphics[scale=.4]{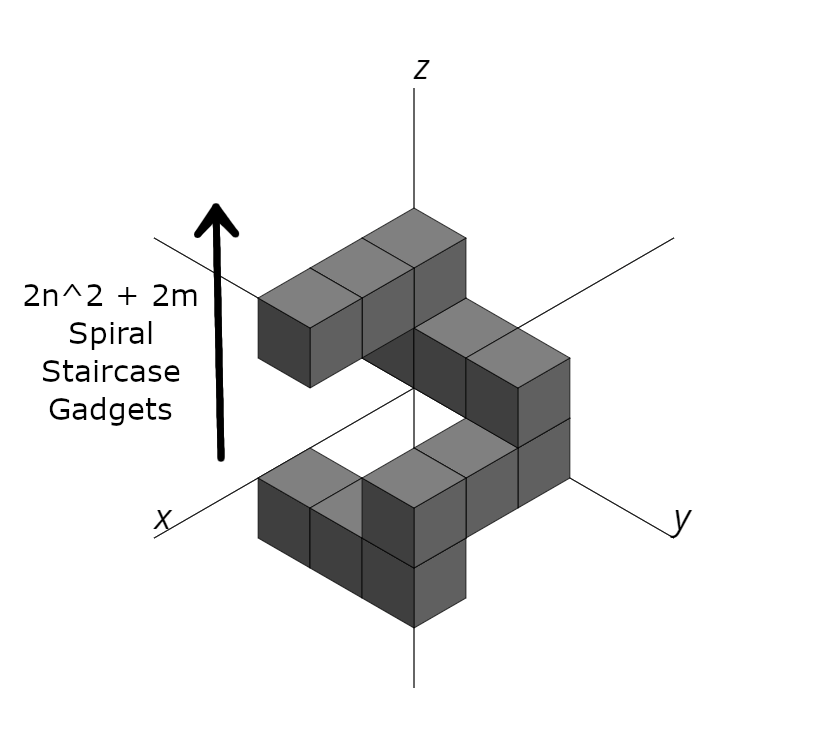}
\caption{Spiral Staircase Gadget}
\label{fig:cathstaircase}
\end{center}
\end{figure}

\begin{figure}[H]
\begin{center}
\includegraphics[scale=.4]{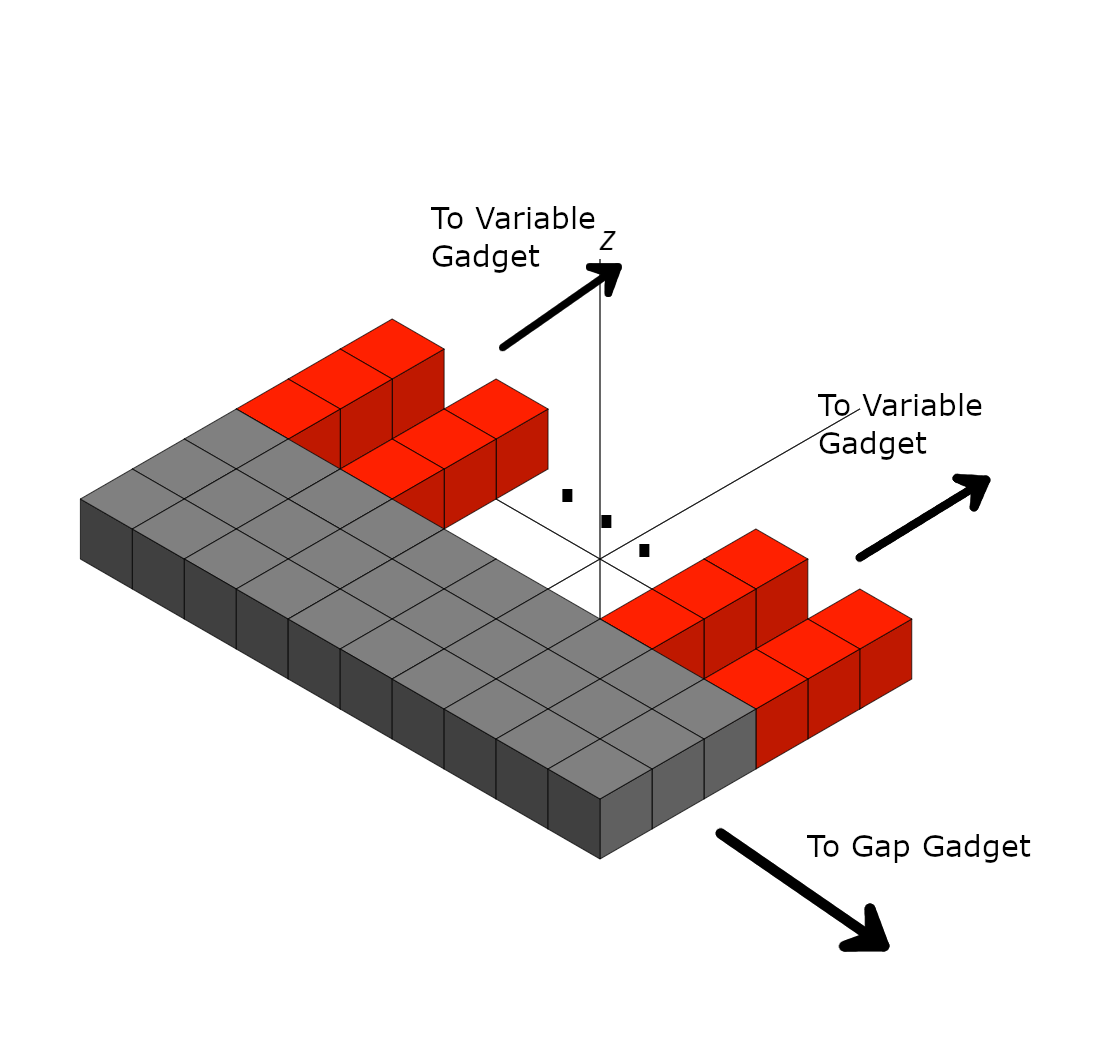}
\caption{Main Platform Gadget}
\label{fig:cathmain}
\end{center}
\end{figure}

\begin{figure}[H]
\begin{center}
\includegraphics[scale=.4]{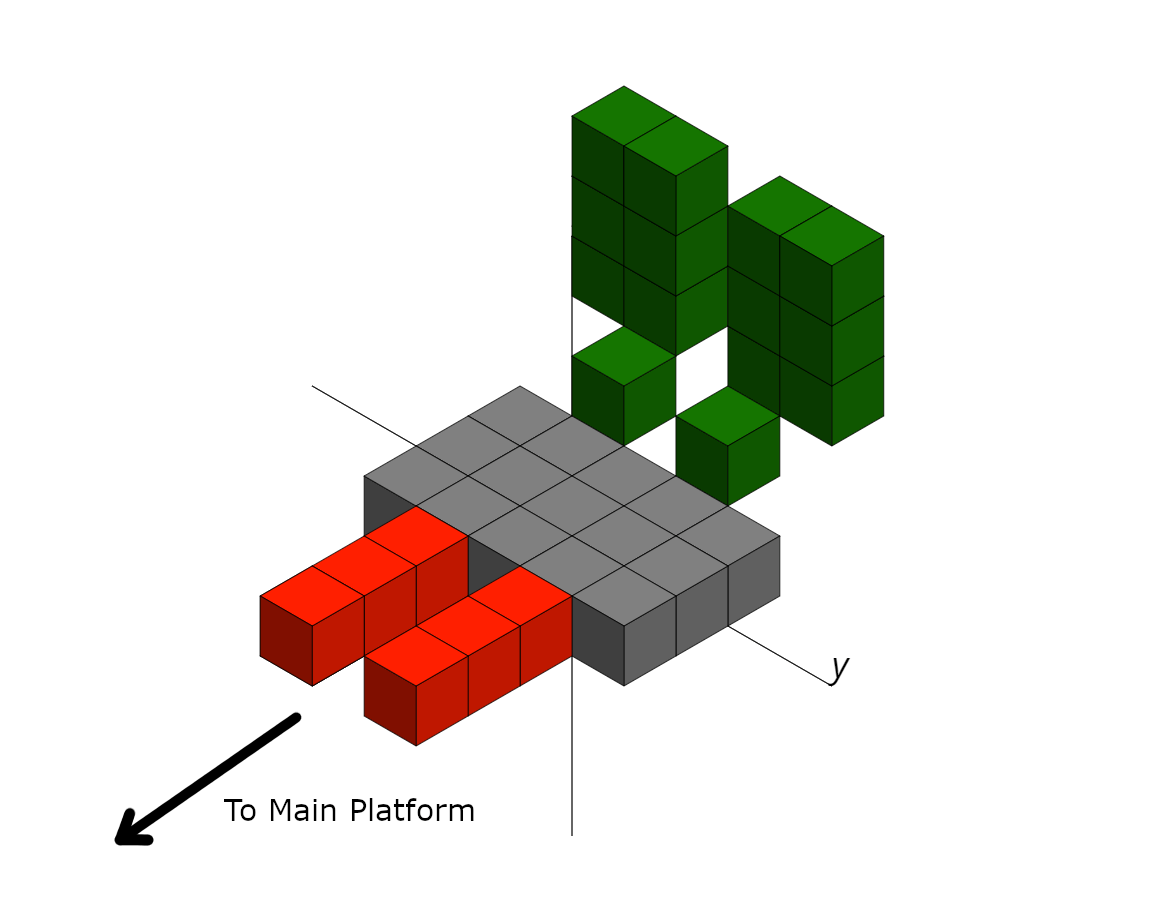}
\caption{Variable Gadget}
\label{fig:cathvariable}
\end{center}
\end{figure}

\begin{figure}[H]
\begin{center}
\includegraphics[scale=.4]{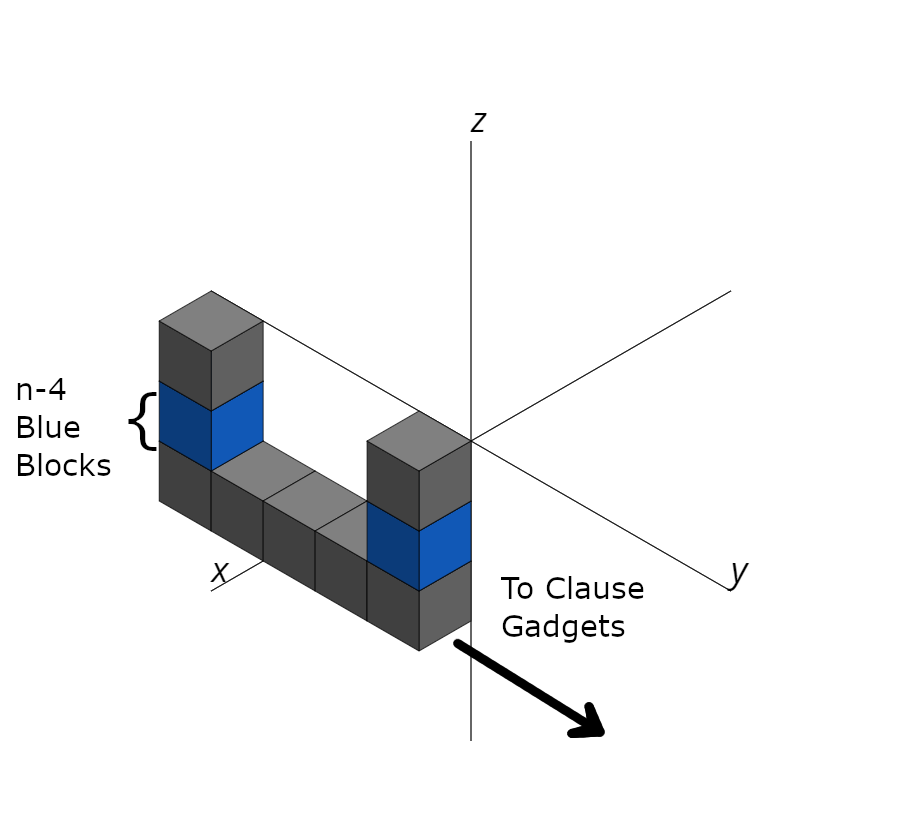}
\caption{Gap Gadget}
\label{fig:cathmaingap}
\end{center}
\end{figure}

\begin{figure}[H]
\begin{center}
\includegraphics[scale=.3]{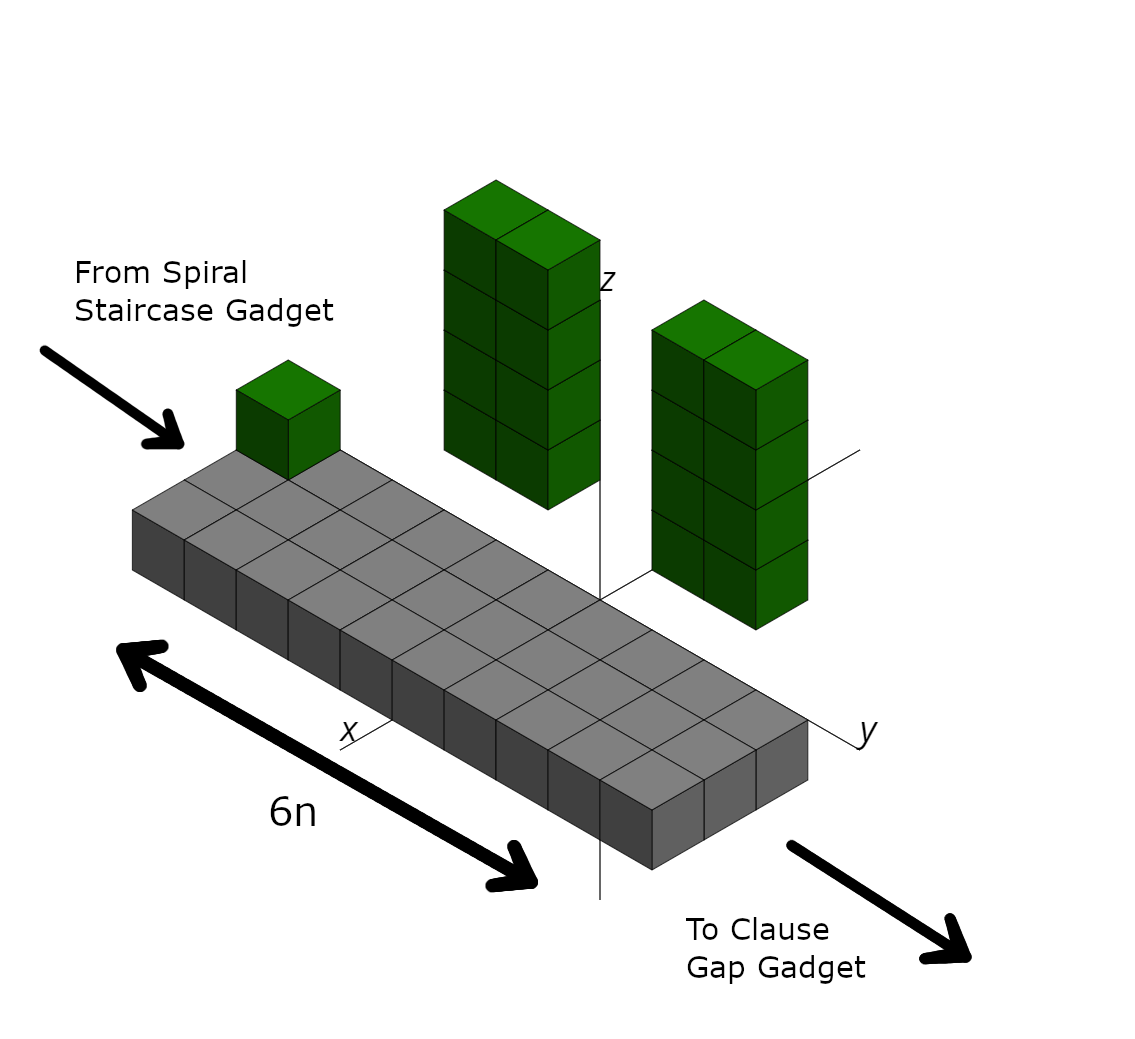}
\caption{Clause Gadget}
\label{fig:cathclause}
\end{center}
\end{figure}

\begin{figure}[H]
\begin{center}
\includegraphics[scale=.3]{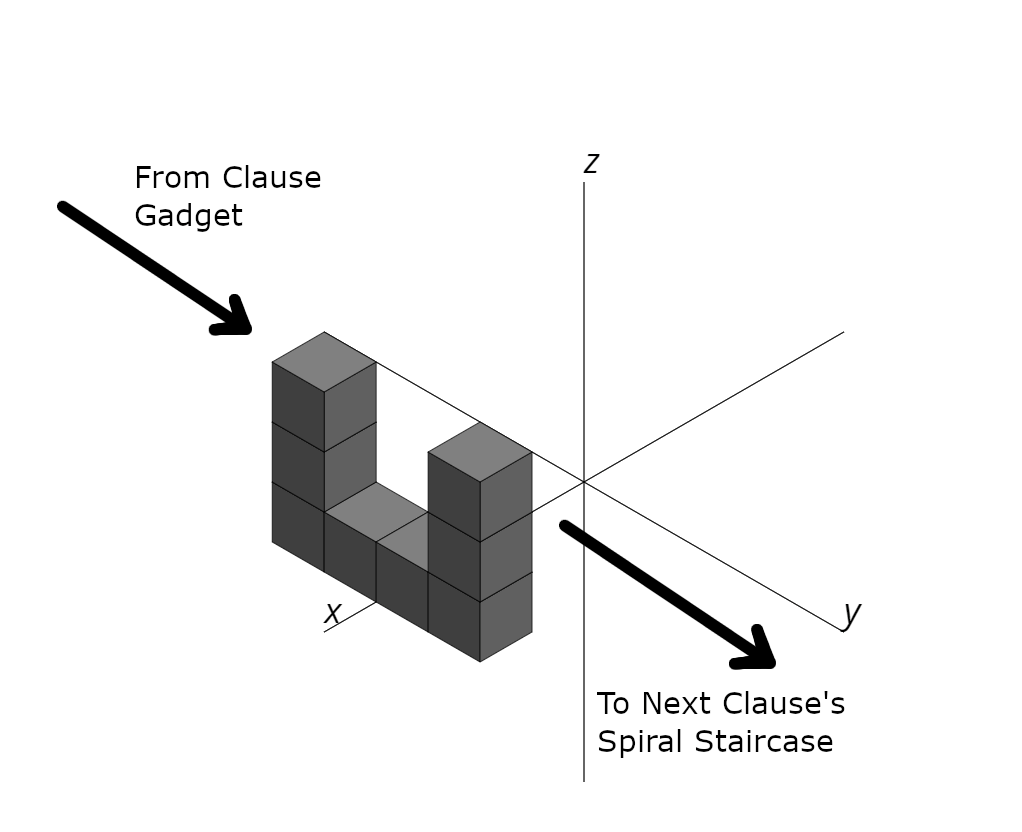}
\caption{Clause Gap Gadget}
\label{fig:cathclausegap}
\end{center}
\end{figure}

\begin{lemma}\label{lem1}
The gap gadget can be crossed if and only if $n$ blocks are pushed into it.
\end{lemma}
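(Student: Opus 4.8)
The plan is to prove the two directions of the biconditional separately: a construction showing that $n$ deposited blocks suffice to cross, and a lower-bound argument showing that no smaller number can. Before either, I would record the handful of physical facts the argument rests on, all of which follow from the rules already stated: (i) by the edge property, a block is stable exactly when one of its bottom edges rests on the top edge of a block below it, so a deposited block either stacks on whatever already occupies its slot or cantilevers one cell past a neighboring support; (ii) Vincent can walk between horizontally adjacent surfaces of equal height and can step up or down by exactly one, but he can descend arbitrarily far only by hanging and dropping; and (iii) crucially, hanging and dropping never produce upward progress, since regaining height always costs a one-step climb onto a supporting block. Fact (iii) is the asymmetry that drives the lower bound.

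$(\Leftarrow)$ For sufficiency I would exhibit an explicit deposit of exactly $n$ blocks and verify that it is crossable. Using the pre-placed immovable stack of $n-4$ blocks as a base in the central slot, I would describe a staircase-and-bridge profile across the three slots whose successive walkable surfaces differ in height by at most one, reading the exact resting heights off Figure~\ref{fig:cathmaingap}. Each placed block I would justify as stable by pointing to the specific edge it rests on (either the slot's growing stack or the adjacent staircase step), and I would then trace a concrete walk for Vincent from the left lip to the right lip that only ever steps up or down by one. Counting the blocks in this profile gives $n$. The deposit order also matters, so I would confirm that Vincent can place the blocks in an order that never strands him, but that is routine once the final configuration is fixed.

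$(\Rightarrow)$ For necessity I would argue that any crossable configuration contains at least $n$ deposited blocks, via a potential/counting argument. The pit has depth $n-3$, so to reach the far lip Vincent must at some point ascend from deep in the pit back to platform height using only single-step climbs; by Fact (iii) every such climb requires a distinct supporting surface one unit higher than the previous one, and the immovable stack supplies only a bounded prefix of these. I would define a monovariant, essentially the largest height to which a stable, reachable walkable surface extends on the exit side, show that it starts below the lip, that it must reach the lip for a crossing to exist, and that depositing one block raises it by at most one. Combined with a separate constant-sized charge for bridging the three-slot width at the turning point, this forces at least $(n-3)+3 = n$ blocks. Hanging is handled by observing that it only moves Vincent horizontally or downward and so can never substitute for a supporting block in this count.

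The main obstacle is the lower bound, and specifically ruling out ``cheap'' configurations: I must show that cantilevering via the edge property, letting blocks catch on the walls or on the immovable stack as they fall, and exploiting hanging to skip parts of the climb cannot collectively bring the count below $n$. The cleanest route I see is to charge each unit of required vertical ascent on the exit side to a distinct deposited block and to verify that no block is charged twice: a cantilever extends a surface sideways but still costs one block per new top edge, and catching during a fall only determines where a block's single contribution lands. Making this charging airtight against \emph{every} placement and traversal order, rather than merely against the intended solution, is where the real work lies.
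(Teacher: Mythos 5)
Your necessity argument has a genuine gap, and it is exactly the point you flag at the end as ``where the real work lies.'' You propose to charge each deposited block one unit of required vertical ascent on the exit side, plus a constant charge for bridging the three-slot width. But the configuration that actually makes this gadget crossable involves no descent or ascent at all: the gap is crossed by a flat bridge at lip level --- one block cantilevered over the first slot, one block resting on a completely filled middle slot, and one block cantilevered over the third slot. For such a crossing there is no ascent to charge, so your monovariant never forces more than the constant bridging cost of about $3$, and your scheme cannot rule out a cheap three-block cantilever bridge. What actually forces the count up to $n$ is a fact about the push dynamics, not about climbing: a block hovering over a slot is supported only by the edge of the lip (or of a filled column) beneath it, so when Vincent pushes a new block in, the previously hovering block is shifted over the middle slot, loses its edge support, and falls to the bottom. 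Hence the middle slot must be completely filled --- $n-3$ blocks, since the pit has depth $n-3$ --- before any shifted block can stay up, and only then can the last three blocks form the bridge. The paper's necessity proof is correspondingly not a potential argument: it observes that the push dynamics force the configuration exactly (after $k<n$ pushes there is necessarily one block hovering over slot 1 and $k-1$ blocks fallen into slot 2, leaving slot 3 uncovered), and then does a short case analysis showing that Vincent dropping into the pit and shuffling blocks among the slots can never get him out. Without this forced-cascade observation, the lower bound does not go through.

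Your sufficiency direction also rests on a misreading of the gadget. The blue stack of $n-4$ immovable blocks in figure \ref{fig:cathmaingap} is a visual placeholder indicating the pit's depth, not a pre-placed base sitting in the central slot; all three slots start empty. (If the central slot were pre-filled to within one block of the lip, roughly four player blocks would suffice to cross, contradicting the very lemma you are proving.) Moreover, a staircase profile of fully supported stacks spanning three slots of depth $n-3$ would cost on the order of $3(n-4)$ blocks, not $n$: the only way to cross with $n$ blocks is the cantilever bridge described above, and the deposit order is not a routine afterthought to ``confirm'' --- it is completely forced by the mechanics, and that forcing is the entire content of the lemma.
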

\begin{proof}
First, we will point out that one can cross the gadget with $n$ blocks. First, recall that the gap gadget is 3-blocks wide and consequently has three sequentially labeled slots where blocks may fall into. The first block pushed into the gap will hover over the gap's first slot due to the edge property. By pushing a second block into the gap, the first block will shift into the second slot of the gap and will no longer be supported by the edge property. As a result, the block will fall to the bottom of the second slot. The second block will now be hovering over the first slot (again supported by the edge property).

By conducting this procedure a total of $n-2$ times, the middle slot of the gap will completely filled with $n-3$ blocks and the first slot will have 1-block (the $n-2$th block) hovering over it due to the edge property. The player could now push two more blocks into the gap to make it crossable. When the $(n-1)$th block is pushed into the gap, it would shift the $n-2$ block from the first slot to the second slot. Since the second slot is completely filled, the block would not fall. If the player now pushes the $n$th block into the gap, it would shift the $(n-1)$th block to the second slot and the $n-2$ block to the third slot. The $n$th and $(n-2)$th blocks would not fall since they are both supported by the edge property. The $(n-1)$th block would not fall since it is in the second slot, which again is completely filled. Since every slot in the gap is covered, the player may now cross the gap. Figure \ref{fig:cathgapcorrect} shows a crossable gap gadget when we have $n=6$ variables.

On the other hand, the gap gadget cannot be crossed by pushing less than $n$ blocks into it. Since Catherine contains no jump mechanic, Vincent will not be able to cross the gap by just jumping over it. If Vincent were to fall into the gap without pushing any blocks into it, he would be stuck there since he cannot pull any blocks from the surrounding walls and has no ability to jump.

Now suppose Vincent pushes $k$ blocks into gap for some $k < n$. Vincent can only push blocks into the gap in same manner as described previously. There is no other way for Vincent to push blocks into the gap. As a result, there would be 1-block hovering over the first slot and $k-1$ blocks that have fallen into the second slot. Since the third slot is still not covered or filled in any way, the gap would not be crossable. An example of this scenario when we have $n=6$ variables is shown in figure \ref{fig:cathgapincorrect1}.

The player could try to fall into the gap after pushing $k$ blocks into it. If the player pulled the block hovering over the first slot away from the gap, they could drop into the first slot. Once there, they would not be able to get out of the gap. The only action they could perform is pushing the adjacent block in the second slot into the third slot. Due to the edge property, stacked blocks in the second slot would not fall. The player could now move into the second slot (see figure \ref{fig:cathgapincorrect2}). From there, they could either go back to the first slot or pull the block from the third slot back into the second spot. In both cases, the player is back in the first spot and is no closer to escaping the pit. Vincent does not have the ability to pull blocks from the walls of the gap.

If the player dropped into the second slot of the gap, then they would be on the top of $k-1$ stacked blocks. From there, they could only choose to either drop down the first or third slot of the gap. If they dropped down to the first spot, the player would be in the exact same situation as the previous case. If they dropped to the third spot, the player would only have the option to push the adjacent block in the second slot into the first slot (see figure \ref{fig:cathgapincorrect3}). This is symmetric to the case when the player is in the first slot and thus, they would not be able to cross the gap.

Consequently, the gap gadget is crossable by Vincent if and only if he pushes $n$ blocks into it.
\end{proof}

\begin{figure}[H]
\begin{center}
\includegraphics[scale=.3]{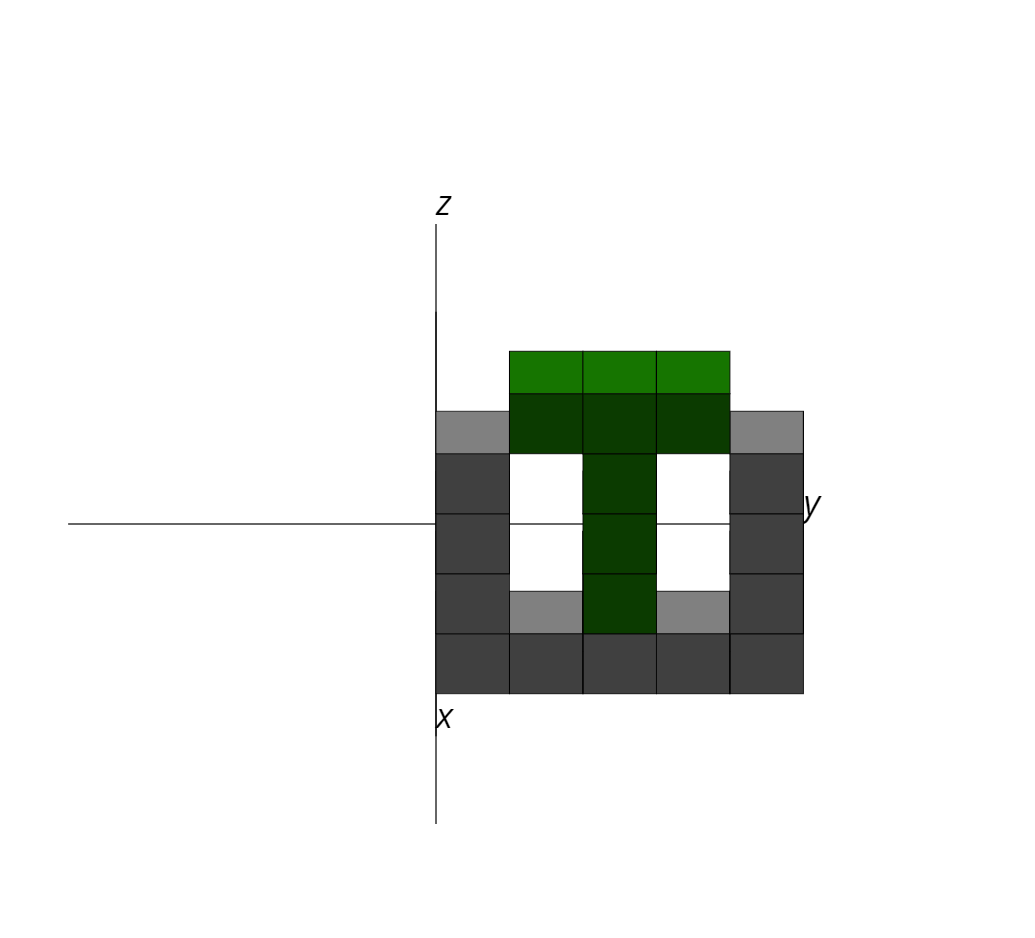}
\caption{A gap gadget for a 3-SAT instance with $n=6$ variables. By pushing in 6 green blocks, the gap is made crossable.}
\label{fig:cathgapcorrect}
\end{center}
\end{figure}

\begin{figure}[H]
\begin{center}
\includegraphics[scale=.3]{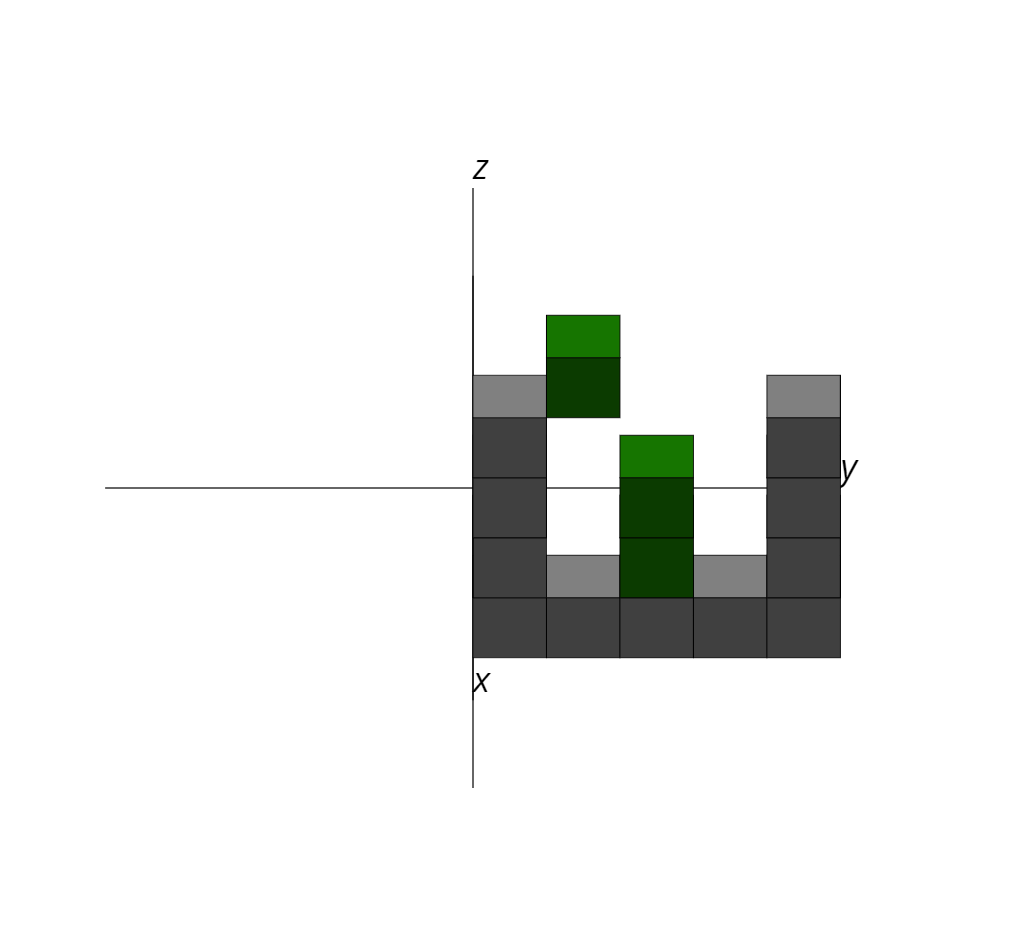}
\caption{A gap gadget for a 3-SAT instance with $n=6$ variables. The gap is not crossable.}
\label{fig:cathgapincorrect1}
\end{center}
\end{figure}

\begin{figure}[H]
\begin{center}
\includegraphics[scale=.3]{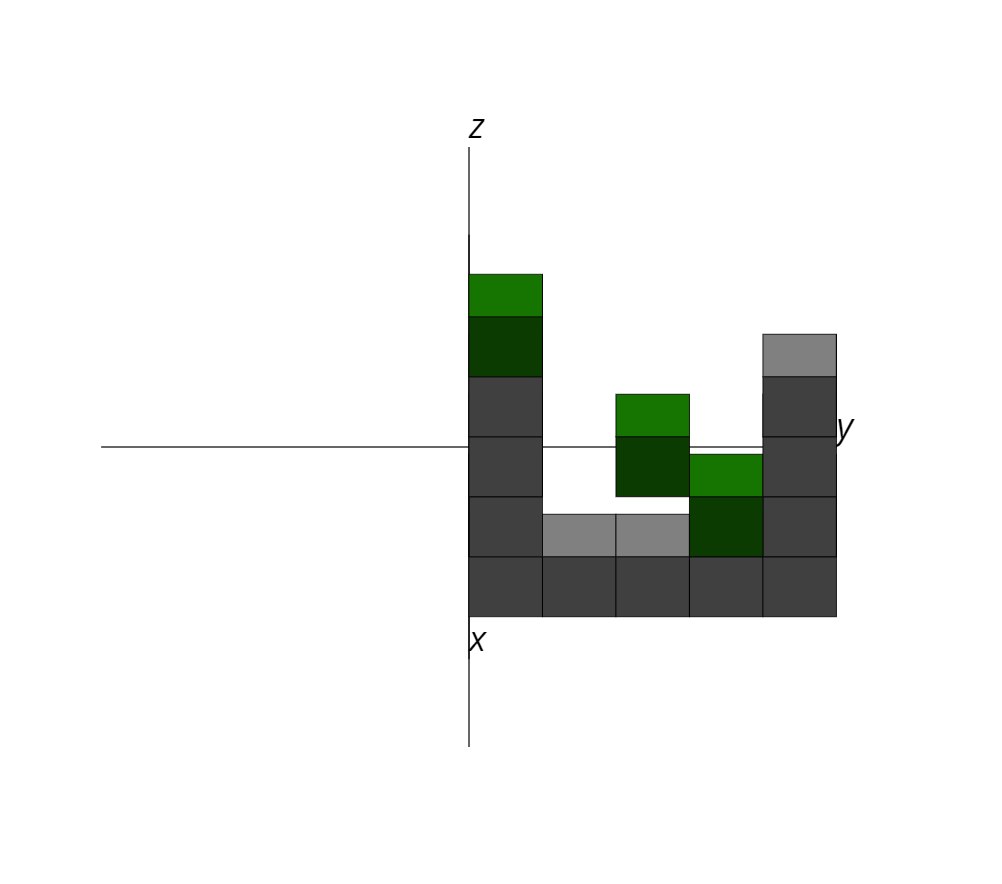}
\caption{A gap gadget for a 3-SAT instance with $n=6$ variables. The gap is not crossable.}
\label{fig:cathgapincorrect2}
\end{center}
\end{figure}

\begin{figure}[H]
\begin{center}
\includegraphics[scale=.3]{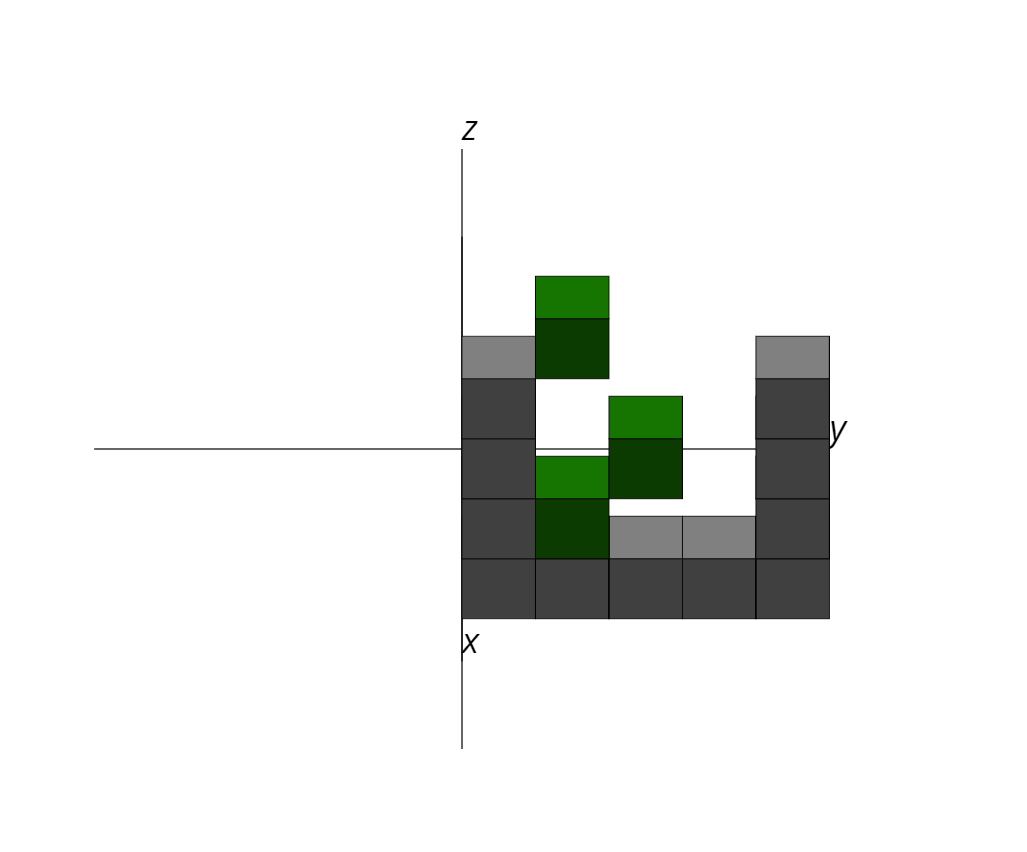}
\caption{A gap gadget for a 3-SAT instance with $n=6$ variables. The gap is not crossable}
\label{fig:cathgapincorrect3}
\end{center}
\end{figure}

\subsection{Correctness of the Reduction}
Suppose a satisfying assignment exists for our embedded 3-SAT formula, we prove that a path must exist from the base of the tower to the exit door at the top of the tower. The player begins by ascending the spiral staircase gadgets from the starting area to the main platform gadget, which is the base of the tower's main segment. By traversing the staircase segment of the tower, the player will gain a height of $3\cdot(2n^2 + 2m)$. Since it takes at most 9 seconds for the player to traverse a spiral staircase gadget, the death plane will have risen by a height of $2n^2 + 2m$ in this time.

Once at the main platform, the player will set the truth value of each variable according to the previously stated satisfying assignment of the 3-SAT formula. Players can set the truth value of a variable by entering its variable gadget through one of its cracked-block pathways and removing the connecting block to one of the towers. This causes the tower to collapse. Since each tower corresponds to a particular truth assignment of a variable, the player should remove the connecting block from the tower which represents the undesired truth assignment. For example, if the player desires to set the variable to True, they would remove the connecting block from the False assignment tower.

The player can now use this removed block to help fill in the gap (recall figure \ref{fig:cathmaingap}) in the pathway to the clause gadgets. The player accomplishes this by pulling the removed block from the variable gadget through the unused cracked-block pathway and pushing it into the gap in the pathway to the clause gadgets. By lemma \ref{lem1}, this gap can be crossed by pushing $n$ blocks into it. The player can obtain these $n$ blocks by setting the truth assignment of all $n$ variables.

The truth assignment of each variable can be set using at most $20$ actions and the block removed from each variable gadget can be pushed into the gap using at most $6n$ actions. Thus, to set the truth assignment of every variable and traverse the gap, it would take at most $n\cdot(6n+20)$ actions. Recall, Vincent can conservatively execute one action per second, thus the traversal would take at most $n\cdot(6n+20)$ seconds. Since the death plane rises at a rate of 1-block per 9 seconds, it will be at a height of $2n^2 + 2m + \frac{n(6n+20)}{9}$, which is below the main platform's height of $3\cdot(2n^2 + 2m)$.

The player can traverse a clause by obtaining two blocks to push into the clause gap gadget (recall figure \ref{fig:cathclausegap} and figure \ref{fig:cathedge}). The player may use the extra block placed in the north-west corner to connect the platform to one of adjacent towers. Note, when this extra block is pushed off the platform, it will not fall since it is supported by the edge property. Consequently, the player now has access to one of the tower(s) and may extract a block from it. Notice, this tower does not collapse due to the edge property. The player can reach the next clause gadget by using the extracted block along with the extra block (which was previously used to connect to a tower) to fill the gap in the pathway to the next spiral staircase. Since a satisfying assignment exists for our embedded 3-SAT formula, each clause will have at least one adjacent tower that the player can pull a block from with the help of the extra block. Thus, every clause can be traversed. When the final clause is traversed, the player reaches the exit door. Each clause can be traversed using at most 20 actions (and consequently 20 seconds). Thus, to traverse all clauses, it would take the player at most $20m$ seconds. In this time, the death plane would have risen to a height of $2n^2 + 2m + \frac{n(n+20)}{9} + \frac{20m}{9}$, which is still below the main platform's height of $3\cdot(2n^2 + 2m)$. Consequently, given a satisfying assignment for the embedded 3-SAT, the player will always be able to reach the top of the tower before the death plane reaches the main segment of the tower.

Now suppose that a path exists from the starting area to the exit door, the 3-SAT that we embedded is satisfiable. For a path to exist, a player must enter each variable gadget and extract at exactly 1-block from one of its towers. The player cannot extract more than 1-block from a variable gadget since each of its connecting pathways can only be traversed twice before it collapses and both Vincent and the block count against a \textit{crack-block's} durability. Suppose, a player attempted to push two blocks over a cracked-block pathway. This would not be possible since each time you traverse a cracked block pathway it durability decreases by one. It is, however, possible to collapse both towers within the variable gadget, which is fine, and is effectively declaring that the variable will not be used. By lemma \ref{lem1}, the player cannot traverse the main platform gap gadget without $n$ blocks.  Thus, the player must take exactly 1-block from each of the $n$ variable gadgets in order to obtain the $n$ blocks necessary to fill in the gap in the pathway to the clause gadgets.

At each satisfied clause gadget, there must exist at least one tower adjacent to it. The player may extract a block from the tower by using the extra block placed in the north-west corner of the platform to form a bridge between the platform and the tower. Notice, the tower does not collapse since it is supported by the edge property. Without at least one tower adjacent to a clause gadget, the player will not be able to traverse the clause gadget, since they would not be able to obtain 2 blocks to place in the clause gap gadget to make it traversable. Suppose, the player tried to cross the gap without using two blocks. Since there is no jump mechanic, they would fall into the gap and be unable to get out of it because the gap has a height of 2 blocks. Now, suppose the player tried to just use the extra block placed on each platform to cross the gap. If they push this block into the gap, it will hover over the first slot of the gap due the edge property. The second slot of the gap would still be open and uncrossable. If the player jumped into the gap, they would be stuck as the in previous case. Moreover, the player would not be able to use blocks from one clause's tower to satisfy another clause since clauses are connected by spiral staircases and there is no mechanic in the game to push blocks up to another level.

If a tower does exist adjacent to a clause gadget, then the clause must be satisfiable since each adjacent tower represents a variable that evaluates to True for the clause. For a path to exist from the starting area to the exit door, all clause gadgets must be traversable. Therefore, each of them must have at least one adjacent tower to pull a block from. This necessarily means that an assignment of variables exists that makes all of the clause gadgets satisfiable. Since all clause gadgets are satisfied, our embedded 3-SAT is satisfiable.

\subsection{In NP}
Because every level has a time limit based on the position of the death plane, the plane moves up after a constant amount of time, and the plane must be placed somewhere on the map, so long as we assume the level itself has only polynomial height, this must mean that if there is a sequence of inputs to make it to the top, it can be verified in polynomial time by simply checking to see if the sequence allows Vincent to reach the top of the tower.

\bibliographystyle{plain}
\bibliography{Block_game_hardness}

\end{document}